\newtheorem{proof}{Proof}
\newtheorem{proposition}{Proposition}
\begin{document}
\title{Secrecy and Covert Communications against UAV Surveillance via Multi-Hop  Networks}
\author{  Hui-Ming Wang, \IEEEmembership{Senior Member, IEEE,} Yan Zhang, Xu Zhang, \\and Zhetao Li, \IEEEmembership{Member, IEEE}
\thanks{H.-M. Wang, Y. Zhang and X. Zhang
are with the School of Electronic and Information Engineering, Xi'an Jiaotong University, and also with the Ministry of Education Key Laboratory for Intelligent Networks and Network Security, Xi'an 710049, China (e-mail: xjbswhm@gmail.com; yzhangxjtu@163.com; jcx8008208820@stu.xjtu.edu.cn).}
\thanks{Z. Li is with Key Laboratory of Hunan Province for Internet of Things and Information Security, and also with The
College of Information Engineering, Xiangtan University, Hunan, China (email: liztchina@gmail.com)}
}

\maketitle

\begin{abstract}
The deployment of unmanned aerial vehicle (UAV) for surveillance and monitoring gives rise to the confidential information leakage challenge in both
civilian and military environments.
The security and covert communication problems for a pair of terrestrial nodes
against UAV surveillance are considered in this paper.
To overcome the information leakage and increase the transmission reliability,
a multi-hop relaying strategy is deployed.
We aim to optimize the throughput by carefully designing the parameters of the multi-hop network, including the coding rates, transmit power, and required number of hops.
In the secure transmission scenario, the expressions of the connection probability and secrecy outage probability of an end-to-end path are derived and the closed-form expressions of the optimal transmit power, transmission and secrecy rates under a fixed number of hops are obtained.
In the covert communication problem, 
under the constraints of the detection error rate and aggregate power, the sub-problem of transmit power allocation is a convex problem and can be solved numerically.
Simulation shows the impact of network settings on the transmission performance.
The trade-off between secrecy/covertness and efficiency of the multi-hop transmission is discussed which leads to the existence of the optimal number of hops.

\end{abstract}
\begin{IEEEkeywords}
Physical layer security, unmanned aerial vehicle, covert communication, multi-hop network.
\end{IEEEkeywords}

\section{Introduction}
The potentially wide applications of unmanned aerial vehicles (UAVs), both in military and civilian fields, have given rise to a significant research enthusiasm in the academic society recently. Due to the advantages such as high mobility and low cost, UAVs have been widely applied as platforms in security monitoring, surveillance and reconnaissance missions,
environmental inspections and so on. UAVs have also been highly promising for playing an important role in future wireless communication systems  \cite{survey_RZHANG}. 
Taking advantage of line-of-sight (LoS) connections established in the UAV-assisted communication systems, the coverage and quality-of-service (QoS) of the networks can be enhanced efficiently.
Owing to their controllable mobility and
cost-effective deployment,
UAVs can work as aerial base stations to provide offloading services in extremely crowded areas and to enable communications when terrestrial networks are damaged due to  sudden natural disasters \cite{uav_trans_receive1,uav_trans_receive2,uav_broadcast}.
In addition, they can be used as mobile relays to establish wireless connectivity for distant users that cannot communicate with each other by reliable direct links \cite{uav_relay1,uav_relay3}.


To facilitate an effective UAV communication and to evaluate its performance, it is crucial to characterize the air-to-ground wireless channel accurately and establish reliable analytical channel models.
In \cite{channel_freespace1,channel_freespace2}, the authors assume that the air-to-ground channel is dominated by a LoS link and therefore follows the free-space pass loss model. To further take into account the small-scale fading, in \cite{channel_small1,channel_small2}, the channel is considered following the Rayleigh fading model. While in \cite{nakagami} and \cite{rician}, the terrestrial-aerial channels are considered as a combination of LoS and scattered components, and are modeled as Nakagami-$n$ and Rician fading models, respectively.
In \cite{channel_angel1} and \cite{opt_lap_altitude}, 
a hybrid model with the LoS and non-line-of-sight (NLoS) components as well as the corresponding occurrence probabilities is considered.
The probability of LoS is related to the environment and the elevation angle of the communication link. We can see that due to the large elevation angle, the LoS component of the air-to-ground channel is dominant and the channel quality is generally much better than the terrestrial fading channel, where the blockage, shadowing effect and various scatters deteriorate the channel significantly.

\subsection{Motivation}
However, the UAVs bring in challenges to information security society. Due to the openness of wireless media and the broadcast nature of wireless communications, any receiver located in the cover range of the transmitter can receive the transmitted signal. UAVs can be deployed as an eavesdropper, which gives rise to the confidential information leakage of wireless transmissions. Especially, benefiting from the advantages of the air-to-ground channel properties compared with those of the terrestrial channel \cite{channel_freespace1} -\cite{Liu_uav_active_eve}, UAVs are highly likely to intercept or detect the transmitted signals when the source node communicates with the receiver directly under severe terrestrial fading channels. One recent work \cite{PELE} considers the scenario in which the UAV acts as a surveillance to eavesdrop the transmissions between the suspicious transceivers on the ground.
The eavesdropping performance is improved by jamming the suspicious communication.
In \cite{uav_eave}
a multi-antenna UAV full-duplex eavesdropper wiretaps and jams the communication between the transmiter and receiver simultaneously by adjusting its transmit and receive beamforming vectors.
In order to confront such threatens, it is of great importance to explore methods protecting the legitimate transmission and preventing information from being intercepted or detected by the hostile UAVs.

To deal with the secure transmission problem, the traditional methods are mainly based on encryption technologies and other higher layer security protocols. More recently, physical layer security (PLS) has emerged as a complementary and promising security solution to guarantee communication secrecy \cite{hwang_phsuav}. Utilizing the physical characteristics of the wireless channels, PLS can achieve transmission secrecy in the sense of information theoretical security \cite{wyner}. This solution has been widely applied in various wireless networks, such as 5G networks \cite{Yang2015Safeguading}, heterogeneous networks \cite{Wang2016Physical}, \cite{Wang2016Physical_book}, cooperative relay networks \cite{Wang2015Enhancing}, etc. Recently,
a number of studies have applied PLS to improve the security of the UAV communication systems in which the UAV works as part of the legitimate communication systems. For instance,  \cite{Liu_uav_active_eve} and \cite{channel_freespace2} investigate the secure communication with UAV receivers in the presence of passive and active eavesdroppers, respectively. 
\cite{R1_review1} and \cite{R2_review1} studied the trajectory and power control problems for UAV-ground systems under one potential ground eavesdroppers and multiple eavesdroppers with imperfect knowledge of locations, respectively.
\cite{R3_review1} investigated another scenario where one legitimate monitor tries to improve its eavesdropping capability toward a UAV-aided system via proactive jamming.

PLS is in the sense that the eavesdropper can receive the signal but can not demodulate it to obtain the correct information bits.
To further improve the security, the transmitter may wish to send  messages through the wireless links without being detected by potential eavesdroppers. In other word, the eavesdropper may not realize the existence of  a wireless communication activity. This is also called as \emph{low probability of detection (LPD)}, or, \emph{covert communication} \cite{limit_awgn}.
The basic theory and the performance limit of the covert communication in additive white Gaussian noise (AWGN) channel are discussed in \cite{limit_awgn} which indicates that at most $\mathcal{O}(\sqrt{n})$ bits can be transmitted to the legitimate receiver reliably in $n$ channel uses, without being detected by the warden.
The asymptotic behavior has also been investigated in other kinds of channels such as the discrete memoryless channel and the binary symmetric channel \cite{memoryless}, \cite{binary}.
Based on these pioneering works, the covert communication has been studied under many circumstances, such as noise uncertainty \cite{noise_uncertain}, channel information uncertainty \cite{channel_uncertain}, finite blocklength \cite{finite_length}, and multi-hop routing network \cite{covert_multihop}.

From the previous discussion, we have known that 
the air-to-ground channels have better characteristics and are modeled differently comparing with terrestrial links, which would increase the stress of communication networks on information confidentiality, as well as result in different considerations conclusions when considering network designing problems in such circumstances.  
With the increasingly wider application of UAV in communication and its potential threaten toward information security, 
secure and covert communications under aerial surveillance  require in-depth studies.
However, only a few research works consider this scenario. 
To the best of our knowledge, only \cite{PELE} and \cite{uav_eave} discussed the PLS transmission with UAV wiretappers. Yet the two works studied ways to improve the eavesdropping performance, not to confront UAV's wiretapping 
and guarantee secure transmission from the terrestrial network's perspective. 
In addition, no previous work has investigated the LPD communication problem under the surveillance of the UAV.

\subsection{Contributions of Our Work}
In this paper, we evaluate and optimize the secrecy and covert communications, respectively,  between a transmitter and a receiver on the ground surveilled by a UAV.
As previously mentioned, the advantages of the air-to-ground channels compared with the terrestrial channels
 lead to the proneness of legitimate information leakage and detection.  
Specially, \emph{the transmitted signal combating the severe fading of the terrestrial channel to be received reliably at the destination is more inclined to be wiretapped by the UAV}.
To solve this problem, one helpful solution
is adopting multi-hop strategy, i.e., multiple intermediate nodes relay the signals from the source to the destination. On one hand, the multi-hop strategy will strengthen the connection of terrestrial network. On the other hand, the transmit power of each node would be reduced sharply compared to the direct transmission from the source to destination, which could drastically decrease the probability of being discovered and detected by the hostile UAV.
This strategy is especially suitable for networks with limited energy and large distance between the source-destination pair, compared with other ways sending signals directly to the destination, e.g. the cooperative remote jamming method in \cite{R1_review3}.

We need to address that
there naturally exists \emph{a non-trivial trade-off between the secrecy/covertness and the efficiency of the multi-hop transmission}, which is highly related to
the  number of hops and the transmit power of each node.
With fewer hops, the transmit power allocated to each node should be sufficiently large in order to guarantee the QoS and avoid connection outage in each hop, which increases the risk of being wiretapped.
On the other hand, the transmit power could be reduced with more hops to avoid the information leakage, but the effective throughput of the network would be influenced which may lead to a lower efficiency. In addition, too many hops will also increase the risk of being detected.
Therefore, the design and optimization of system parameters for the security and covert transmissions using multi-hop relaying is the main focus of this paper.

In our work, the randomness with regard to the LoS and NLoS characteristics is considered when modeling the air-to-ground channels. The channel between one ground device and the UAV can either be LoS or NLoS, and is modeled with different channel coefficients under LoS and NLoS conditions, which is a significant difference from terrestrial wiretapping circumstances.
Under the proposed multi-hop transmission strategy, our objective is to maximize the throughput by optimizing the network parameters including the coding rates, transmit power, and required number of hops.
In the secure transmission problem, the secrecy outage probability must be kept under a tolerant level.
While in the covert communication problem, the detection error rate at the UAV is required to be higher than a predetermined threshold.
The main contributions are summarized as follows:

1) For the secrecy transmission problem, the exact expressions of the connection probability and the secrecy outage probability of an end-to-end path are derived. 
The secrecy throughput maximization problem under the constraints of the secrecy outage probability as well as the aggregate transmit power is discussed. 
The closed-form expressions of the optimal transmit power, transmission and secrecy rates are obtained. Further more, the optimal trade-off with respect to the transmission security and efficiency is evaluated.

2) For the covert communication problem, the detection error rate is used to measure the covertness performance, and a lower boundary is constructed using the relative entropy between the two joint distributions under different hypotheses. 
The optimization of transmit power at each hop turns to be a convex optimization problem and could be solved numerically. 
The expression of the optimal transmission rate is also derived, and the optimal number of hops is evaluated.

Note that though there exist some differences between the secrecy and covert communications in problem construction and solving procedures, 
the considered problems in secrecy and covert communication parts are closely related. 
First, both problems aim to guarantee the transmission security in the multi-hop networks. Secrecy communication prevents information being demodulated and covert communication conceals the transmission from being detected by the UAV.
Second, there exists a trade-off between the security and efficiency in both secrecy and covert communications. The increase of number of hops will degrade UAV's wiretapping and detecting ability and result in the increase of transmission rate, but more time for receiving information will be required and more chances for wiretapping/detecting signals will be provided. 
In addition, both problems maximize the end-to-end throughput, and the 
opposite influence of transmission rate on connection probability and the coding rate which are two important component of the throughput 
leads to the existence of optimal secrecy rate in secrecy communication and transmissions rate in covert communication.


\subsection{Related Works}
More recently, PLS and covert communications have also been extended to multi-hop relaying systems \cite{covert_multihop,adhoc1,adhoc12,linear}.
Specifically, in \cite{covert_multihop}, the optimal path is chosen for a covert communication network in the presence of multiple collaborating wardens. \cite{adhoc1} focuses on the secure routing and power allocation problem with the deployment of cooperative jamming and the knowledge of the eavesdroppers, while \cite{adhoc12} deals with the routing and power optimization under randomly distributed eavesdroppers and one multi-antenna jammer. In \cite{linear}, the design and secrecy performance of a linear multi-hop network with randomly distributed eavesdroppers are analyzed under on-off and non-on-off transmission schemes respectively.
 The aforementioned works all consider the terrestrial eavesdropping/detecting circumstances, which quite differs from our aerial surveillance scenario and have different research targets from ours.

The remainder of the paper is organized as follows: In Section II, the system model of the multi-hop relay network is presented, and the secrecy transmission optimization problem as well as the covert communication problem are formulated. In Section III, the parameters of the network are optimized sequentially. The solution of the covert communication problem is provided in Section IV.
Section V presents the numerical results and Section VI draws the conclusions.

\section{System Model}

\begin{figure}
	\centering
	\includegraphics[width=3.5in]{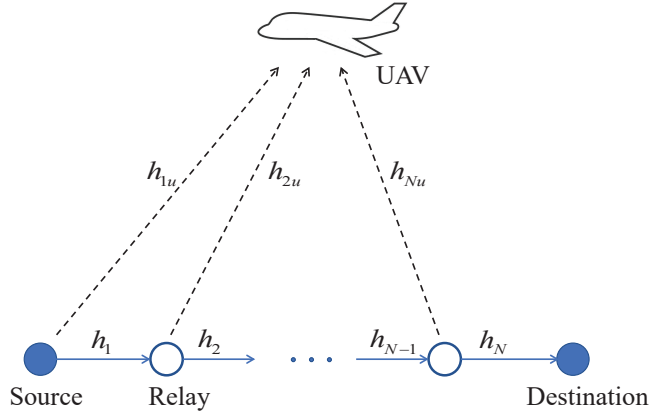}
	\caption{The system model with linear multi-hop relaying network eavesdropped by UAV.}
	\label{system_model}
\end{figure}

We consider the communication between one pair of source and destination nodes with distance $D$. One UAV in the air monitors any suspicious communication and wiretaps the transmitted signal. To avoid being detected by the UAV and to guarantee the security transmission, the multi-hop relaying is deployed and the decode-and-forward (DF) protocol is considered for confidential signal convey.
The source, destination, relays, as well as the UAV are all equipped with a single omnidirectional antenna, respectively.
We assume that the UAV has a fixed location in the air which is known by the ground network. 
The relays are placed equidistantly on the line from the source to the destination, so as to simplify the investigation toward the influence of the number of hops \cite{linear1}. 
Supposing that there are $N-1$ relays in the network, then the signals will be transmitted via $N$ hops through the end-to-end route from the source to the destination, and the distance between the transmitter and receiver of each hop can be denoted as $d_{tr}=\frac{L}{N}$.
The system model of this linear relaying network is illustrated in Fig. \ref{system_model}.

The wireless channels between the terrestrial nodes are subject to a small-scale Rayleigh fading together with a large-scale path loss.
The path loss exponent is $\alpha$.
The Rayleigh fading coefficient for the $n$-th hop is denoted as $h_n$, which is modeled as an independent complex Gaussian variable with zero mean and unit variance, i.e. $h_n\sim\mathcal{CN}(0,1)$.
For the air-to-ground channels, we adopt the hybrid model proposed in \cite{channel_angel1}.
Specifically, we suppose that the probability for each path being LoS is $P_{los}$,
where the value of $P_{los}$ primarily depends on the propagation environment and the elevation angle \cite{Liu_uav_active_eve}. The probability for the air-to-ground channel from the $n$-th terrestrial transmitter node to the UAV being LoS can be expressed as
\begin{equation}
P_{los}(\theta_n)=\frac{1}{1+C\exp(-B(\theta_n-C))},
\end{equation}
where $B$ and $C$ are constant depending on the environment and $\theta_n$ is the elevation angle which can be further written as
\begin{equation}
\theta_n=\frac{180}{\pi}\sin^{-1}\left(\frac{H}{d_{nu}}\right),
\end{equation}
with $H$ representing the height of the UAV, and $d_{nu}$ the distance between the $n$-th transmitter and the UAV.
The probability for the path being NLoS is
$P_{nlos}(\theta_n)=1-P_{los}(\theta_n)$.

The small-scale fading coefficient 
from the transmitter of the $n$-th hop to the UAV, denoted as $h_{nu}$, follows Rayleigh fading. For large-scale path loss, LoS and NLoS links possess different fading coefficients $a_{nu}$, which can be expressed as:
\begin{equation}
a_{nu}=\left\{
             \begin{array}{lcl}
           \sqrt{\lambda_0}  d_{nu}^{-\beta_1/2} &\quad \mathrm{LoS\  link} \\
             \sqrt{\eta \lambda_0} d_{nu}^{-\beta_2/2} &\quad\mathrm{NLoS\ link}
             \end{array}
        \right.
        ,
        \end{equation}
where $\beta_1$ and $\beta_2$ represent the path loss exponents for the LoS link and NLoS link respectively,
$\lambda_0$ is the path loss at the reference distance $d_0=1$m,
and $\eta$ is the excessive path-loss coefficient, which shows an additional attenuation due to the 
higher path loss of the NLoS connection.

Under the DF relaying protocol, at the $n$-th hop, the received signals of the $n$-th terrestrial receiver and the UAV $y_{R_n}$ and $y_{E_n}$ are written as
\begin{align}
&y_{R_n} = \frac{\sqrt {p_{n}\lambda_0} h_n}{d_{tr}^{\alpha /2}}{x_{n}} + n_{R_n}\label{y_n} \  \  \text{and} \\
&y_{{E_n}} =\sqrt{p_n} a_{nu} h_{nu} {x_n} + n_{E_n}, \label{y_e}
\end{align}
respectively. $x_n$ and $p_n$ are the transmitted symbol and corresponding transmit power at the $n$-th hop, respectively,  and $n_{R_n}$ and $n_{E_n}$ represent the noise at the $n$-th terrestrial receiver and the UAV following $\mathcal{CN}(0,\sigma_0^2)$.
Defining $\sigma^2\triangleq \frac{\sigma_0^2}{\lambda_0}$, the expressions of the corresponding SNR can then be expressed as
 \begin{align}
 &\mathrm{SNR}_{R_n}=\frac{p_{n}|h_n|^2}{d_{tr}^\alpha\sigma^2}\label{SNR_R} \  \  \text{and}\\
&\mathrm{SNR}_{E_n}=\frac{p_{n}a_{nu}^2|h_{nu}|^2}{\lambda_0\sigma^2}. \label{SNR_E}
 \end{align}

With the expressions of the received signals and SNRs as well as other notations above, in the following, we formulate optimization problems in secrecy and covert transmissions, respectively. Due to the differences between the secrecy and covert communications, different signaling models, knowledge of the channel state information and optimization targets are assumed in these two conditions. 

\subsection{Secrecy Transmission}
We first consider the PLS transmission strategy under the eavesdropping of the UAV.
Under Wyner's encoding scheme, different codebooks are adopted to retransmit the signal in multiple DF hops, as proposed in \cite{DFsecure}. Hence the UAV 
are not able to jointly decode the signals received from multiple hops, but can only decode these signals individually. We denote the transmission rate and secrecy rate in $n$-th hop as $R_{t,n}$ and $R_{s,n}$, respectively.

For the $n$-th receiver on the end-to-end link,
the legitimate node can decode the message correctly if its channel capacity $C_{t,n}$ is greater than the transmission rate $R_{t,n}$, i.e. $C_{t,n}>R_{t,n}$, which is equivalent to $\mathrm{SNR}_{R_n}>\gamma_{c,n}\triangleq 2^{R_{t,n}}-1$. According to Wyner's wiretap coding theory, the transmission confidentiality fails when the capacity of the channel from the $n$-th transmitter toward the UAV $C_{e,n}$
is greater than the rate redundancy $R_{e,n}\triangleq R_{t,n}-R_{s,n}$, i.e., $C_{e,n}>R_{e,n}$,
which is equivalent to $\mathrm{SNR}_{E_n}>\gamma_{e,n}\triangleq 2^{R_{e,n}}-1 $. $R_{e,n}$ reflects the ability to secure the transmission against the wiretapping.

To convey the end-to-end confidential information under the multi-hop DF protocol and guarantee the consistency of the message, the secrecy rates for all hops should be identical, i.e,. $R_s=R_{s,1}=\cdots=R_{s,N}$, while $R_{t,n}$ (and hence $R_{e,n}$) can be different for different hops \cite{DFsecure}. 
Considering the complexity of the optimization problem when separate transmission rates are adopted, 
in this paper we suppose that fixed-rate $R_t$ is used to simplify the problem. Hence we have $R_t=R_{t,1}=...R_{t,N}$.
Based on this assumption and the independence of the received signals for different hops, we now derive the definitions of the connection and secrecy outage probabilities
for an end-to-end link.
The connection probability is defined as the probability that each receiver can decode the message correctly, which can be written as:
\begin{equation}\label{def_pc}
P_c\triangleq\prod_n \mathrm{Pr}\left(\mathrm{SNR}_{R_n}>\gamma_c\right),
\end{equation}
where $\gamma_c=2^{R_t}-1$.
The secrecy transmission occurs when $\mathrm{SNR}_{E_n}$ for each hop is less than $\gamma_e=2^{R_e}-1$. Therefore, the secrecy outage probability can be defined as
\begin{equation}\label{def_pso}
P_{so}\triangleq1-\prod_n \mathrm{Pr}\left(\mathrm{SNR}_{E_n}<\gamma_e\right).
\end{equation}

The end-to-end \emph{secrecy throughput} reflecting the spectral efficiency of a complete signal transmission from the source to the destination \cite{DFsecure}, which is defined as
\begin{equation}\label{def_throughput}
\Phi_S\triangleq \frac{1}{N}P_c R_s.
\end{equation}
Our objective is to maximize the secrecy throughput for the multi-hop network under the secrecy outage constraint as well as the sum power constraint,  by optimizing the number of hops required, the transmission rate, the secrecy rate and the transmit power for each node. The problem can be formulated as
\begin{subequations}\label{secure_prob}
\begin{align}
\max_{N,R_t,R_s,p_n}&\ \frac{1}{N}P_c R_s \label{secure_prob_obj}\\
\mathrm{s.t.}&\ P_{so}\leq\zeta \label{secure_prob_cstr_pso}\\
&\ \sum_n p_n \leq P_T,
\end{align}
\end{subequations}
in which $\zeta$ represents the constraint of $P_{so}$, and $P_T$ is the maximal value of sum power.
This problem will be solved in detail in Sec. III.

\subsection{Covert Transmission}
In this subsection, we consider the covert communication strategy for the multi-hop network.
The UAV has the ability to detect whether there exists signals transmitted. In order to avoid being detected by the UAV, the network on the ground needs to design the transmission carefully.

In order to detect the existence of the signal transmission, the UAV is required to distinguish between the
following two hypotheses of the received signals:
\begin{equation}
y_n(l)=\left\{
\begin{array}{lcl}
            \sqrt{p_n} a_n h_{nu} x_n(l)+n_{E_{n}}(l) &\quad H_1\\
              n_{E_{n}}(l) &\quad  H_0
             \end{array}
        \right.
         l=1,...,L.
\end{equation}
where $L$ represents the length of the codeword in a channel coherent block. $H_1$ denotes the hypothesis that there is signal transmission on the ground, and $H_0$ means no  transmission happening.
To facilitate the analysis, $x_n(l)$ is supposed to be a phase shift keying (PSK) signal with unit amplitude and its phase denoted as $\phi$. Then under the hypothesis $H_1$, we define
$h'_{nu}\triangleq h_{nu} e^{j\phi}$, and $h'_{nu}$ still follows $\mathcal{CN}(0,1)$.
Therefore, the distribution of the received signal at the UAV follows
\begin{equation}
\left\{
\begin{array}{lcl}
              \mathcal{CN}(0,p_n a_n^2+\sigma_0^2)&\quad H_1\\
               \mathcal{CN}(0,\sigma_0^2)&\quad  H_0
             \end{array}
        \right.        
\end{equation}


As a detector, the UAV aims to make more correct decisions and maintain low possibility of detection error which include the probability of missed detection and the probability of false alarm $P_{FA}$. The probability of missed detection $P_{MD}$ represents the possibility that the UAV declares nonexistence of communication while the terrestrial nodes transmit signals, i.e. $P_{MD}=P\left\{H_0|H_1\ \mathrm{is\ true}\right\}$. While  $P_{FA}$ denotes the condition when UAV declares the existence of communication yet no signal transmission happens, i.e. $P_{FA}=P\left\{H_1|H_0\ \mathrm{is\ true}\right\}$. 

We assume that the UAV makes a decision based on all signals it received on the multi-hop transmission. To guarantee the covertness, the probability of detection error must satisfy
\begin{equation} \label{def_detection_err}
P_{MD}+P_{FA}\geq 1-\epsilon,
\end{equation}
where $\epsilon$ can reflect the strictness of the constraint on the detection error rate and usually has a small value. Assuming that the UAV performs the optimal test, then according to the Pinsker's inequality \cite{limit_awgn}, we can obtain a lower boundary of the probability of detection error expressed as
\begin{equation} \label{upbound_detection_err}
P_{MD}+P_{FA}\geq 1-\sqrt{\frac{1}{2}D(Q_1||Q_0)},
\end{equation}
where $Q_1$ ($Q_0$) represents the joint probability distribution of all the UAV's received signals when hypothesis $H_1$ ($H_0$) is true. $D(Q_1||Q_0)$ is the relative entropy between $Q_1$ and $Q_0$, and is defined as 
\begin{equation}
D(Q_1||Q_0)\triangleq\int Q_1(x)\ln\frac{Q_1(x)}{Q_0(x)}dx.
\end{equation}
Considering (\ref{def_detection_err}) and (\ref{upbound_detection_err}) jointly, we can obtain 
\begin{equation} \label{entropy_constraint}
D(Q_1||Q_0)\leq 2\epsilon ^2.
\end{equation}
When (\ref{entropy_constraint}) is satisfied, it is  undoubtedly that  (\ref{def_detection_err}) is also satisfied.  Therefore, (\ref{entropy_constraint}) is a tighter constraint on the detection error rate $P_{MD}+P_{FA}$. 

Similar to the secrecy transmission condition, different code books are adopted to re-transmit the signals in multiple DF hops. Hence the received signals over different hops are independent with each other, and we have
\begin{equation}
Q_i=\prod_{n=1}^N Q_i^{(n)},\ i=0,1 ,
\end{equation}
in which $Q_i^{(n)}$ is the joint probability distribution of the UAV's received signals over the $n$-th hop under hypothesis $H_i$.  
Due to the definition of relative entropy and the independence of received signals over different hops, 
\begin{align}
D(Q_1||Q_0)=\sum_n D(Q_1^{(n)}||Q_0^{(n)}).
\end{align}

In covert communication situation, the terrestrial network cares little about confronting UAV's decoding and the secrecy rate, which differs from the secrecy communication.
We consider \textit{transmit throughput} $\Phi_C\triangleq P_c R_t/N $ to reflects the transmission efficiency.
The optimization objective is to maximize the end-to-end transmit throughput $\Phi_C$ under constraints on the relative entropy $D(Q_1||Q_0)$ and sum of transmit power over different hops. 
The optimization problem can be formulated as
\begin{subequations}\label{detect_problm}
\begin{align}
\max_{N,R_t,p_n}&\ \frac{1}{N}P_c R_t\\
\mathrm{s.t.}&\ D(Q_1||Q_0)\leq 2\epsilon ^2,\\
& \sum_n p_n\leq P_T.
\end{align}
\end{subequations}


\emph{Remark}: The objectives in
(\ref{secure_prob}) and (\ref{detect_problm}) can reflect the consideration on reliability and delay in multi-hop transmissions. 
$P_C$ represents the probability that the destination can decode the message correctly, which can reflect the network's reliability. The number of hops $N$ will influence the delay of one source-destination transmission. In secrecy/covert transmission problems, both $P_C$ and $N$ would affect the security/covertness of the multi-hop network. To balance the  security, reliability and delay of the network, $\Phi_S$ and $\Phi_C$ are considered to optimize the performance of multi-hop transmission networks.

Till now we have obtained the secrecy and covert communication problems in (\ref{secure_prob}) and (\ref{detect_problm}).
In the following two sections, 
we will provide comprehensive analysis and solution methods of the
two problems. 

\section{Security Transmission Optimization}
In this section, we aim to solve the security transmission problem (\ref{secure_prob}). We first provide the closed-form expressions of the connection probability and the secrecy outage probability.
According to the definition in (\ref{def_pc}), the connection probability for any given $N$ is given by
 \begin{align}\label{expr_pc}
P_c&=\prod_n\mathrm{Pr}\left\{\frac{p_{n}|h_n|^2}{d_{tr}^\alpha\sigma^2}>\gamma_c\right\}\nonumber\\
&=\prod_n\mathrm{Pr}\left\{|h_n|^2>\frac{\gamma_c d^\alpha_{tr}\sigma^2}{p_n}\right\}\nonumber\\
&=\exp\left(-\sum_n\frac{\gamma_c d^\alpha_{tr}\sigma^2}{p_n}\right),
 \end{align}
where (\ref{expr_pc}) is due to the exponential distribution of the fading coefficient $|h_n|^2$.

Before obtaining the detailed expression of secrecy outage probability $P_{so}$ for the end-to-end transmission, we first consider the security probability at the $n$-th hop $P_s^n$, which can be defined as
\begin{align}\label{expr_psn}
P_{\mathrm{s}}^n&\triangleq\mathrm{Pr}(\mathrm{SNR}_{En}<\gamma_e)\nonumber\\
&\overset{(a)}{=}P_{los}(\theta_n)\mathrm{Pr}(\mathrm{SNR}_{En}<\gamma_e|\mathrm{LoS}) +\left(1-P_{los}(\theta_n)\right)\mathrm{Pr}(\mathrm{SNR}_{En}<\gamma_e|\mathrm{NLoS})\nonumber\\
&=P_{los}(\theta_n)\mathrm{Pr}\left(|h_{nu}|^2<\frac{\gamma_e d_{nu}^{\beta_1}\sigma^2}{p_n}\right)+\left(1-P_{los}(\theta_n)\right)\mathrm{Pr}\left(|h_{nu}|^2<\frac{\gamma_e d_{nu}^{\beta_2}\sigma^2}{\eta p_n}\right)\nonumber\\
&\overset{(b)}{\approx}\mathrm{Pr}\left(|h_{nu}|^2<P_{los}(\theta_n)\frac{\gamma_e d_{nu}^{\beta_1}\sigma^2}{p_n}+\left(1-P_{los}(\theta_n)\right)\frac{\gamma_e d_{nu}^{\beta_2}\sigma^2}{\eta p_n}\right)\nonumber\\
&=1-\exp\left(-\frac{\gamma_e \sigma^2}{\eta p_n}\left(P_{los}(\theta_n)\eta d_{nu}^{\beta_1}+\left(1-P_{los}(\theta_n)\right) d_{nu}^{\beta_2}\right)\right),
\end{align}
where $(a)$ holds for the probability distribution of the LoS and NLoS air-to-ground channels, and
 $(b)$ is due to the Jensen's inequality.
Substituting  (\ref{expr_psn}) into (\ref{def_pso}), the secrecy outage probability of the entire path is
\begin{equation}\label{expr_pso}
\begin{split}
P_{\mathrm{so}}&=1-\prod_{n=1}^N P_{\mathrm{s}}^n\\
&=1-\prod_{n=1}^N\left\{1-\exp\left(-\frac{\gamma_e \sigma^2}{\eta p_n}\left(P_{los}(\theta_n)\eta d_{nu}^{\beta_1}+\left(1-P_{los}(\theta_n)\right) d_{nu}^{\beta_2}\right)\right)\right\}.
\end{split}
\end{equation}
(\ref{expr_pc}) and (\ref{expr_pso}) indicate that the increase of transmit power can lead to the raise of both probabilities of connection and information leakage. Therefore, the transmit power is required to be designed carefully in order to optimize the secrecy performance of the multi-hop network.

From (\ref{expr_pc}) and (\ref{expr_pso}), we know that the connection probability $P_c$ is a function of $N$, $R_t$ and $p_n$.
Meanwhile, $R_t$ and $p_n$ are coupled with $R_s$ due to the constraint of the secrecy outage probability shown in (\ref{secure_prob_cstr_pso}). Hence $P_c$ can be expressed as the function of all variables $N, p_n, R_t$, and $R_s$.
Therefore, problem (\ref{secure_prob_obj}) can be solved by optimizing $N$, $R_s$, $R_t$, and $p_n$ sequentially, i.e.
\begin{equation}\label{secure_steps}
\max_{N,R_t,R_s,p_n} \frac{1}{N}P_c R_s \Leftrightarrow\max_N\frac{1}{N}\left(\max_{R_s}R_s\left(\max_{R_t,p_n}P_c\right)\right).
\end{equation}
Based on the expressions of the connection probability and  secrecy outage probability of the multi-hop network,
we propose a method to solve this problem and provide a globally optimal solution\footnote{Note that though some approximations have been applied, the solution is not locally optimal.}. 
The method can be divided into three parts: First, we optimize the transmit power $p_n^*$ and $R_t^*$ for each node under any given $R_s$ and $N$; Then the optimal
$R_s^*$ under any $N$ is optimized further; Finally, the number of hops optimizing the throughput is found. In each part, the specific expressions of the optimal transmit powers, transmission rate, and secrecy rate are derived, and the optimal number of hops can be obtained using a one-dimensional searching method.

\subsection{Transmit Power and Rate Optimization}
In this part, we focus on optimizing the transmit power $p_n$ and transmission rate with fixed values of $R_t$ and $N$ under the secrecy outage constraint.
This sub-problem can be written as the following form:
\begin{subequations}
	\begin{align}
 \max_{p_n, R_t}&\ \exp\left(-\sum_n\frac{\gamma_c d^\alpha_{tr}\sigma^2}{p_n}\right)\\
 \mathrm{s.t.}&\ 1-\prod_n\left\{1-\exp\left(-\frac{\gamma_e \sigma^2}{\eta p_n}\left(P_{los}(\theta_n)\eta d_{nu}^{\beta_1}+\left(1-P_{los}(\theta_n)\right) d_{nu}^{\beta_2}\right)\right)\right\}\leq\zeta \label{constr_pso_org}\\
 &\ \sum_n p_n\leq P_T.
\end{align}
\end{subequations}
We define $b_n\triangleq \frac{\sigma^2}{\eta}\left(P_{los}(\theta_n)\eta d_{nu}^{\beta_1}+\left(1-P_{los}(\theta_n)\right) d_{nu}^{\beta_2}\right)$ for brevity.
Note that $\exp\left(-\frac{\gamma_e b_n}{p_n}\right)$ represents the secrecy outage probability at the $n$-th hop and is usually much smaller than 1 (since $p_{so}$ is constrained by a small $\zeta$).
Since $\prod(1-x_k)\approx 1-\sum x_k$ when $0<x_k\ll 1$ \cite{linear},
(\ref{constr_pso_org}) can be rewritten as
\begin{equation}
\sum_n\exp\left(-\frac{\gamma_e b_n}{p_n}\right)\leq\zeta
\end{equation}
Due to  $R_t=R_s+R_e$, under fixed value of $R_s$, finding the optimal value of $R_t$ is equivalent to optimizing $R_e$ (and therefore $\gamma_e$).
Based on $\gamma_e=2^{R_t-R_s}-1$ and $\gamma_c=2^{R_t}-1$, we can derive $\gamma_c=(\gamma_e+1)2^{R_s}-1$.
Setting $t_n\triangleq \gamma_e/p_n$, we can get the following optimization problem:
\begin{subequations}
\begin{align}
\min_{\gamma_e, t_n}&\ \sum_n{t_n}\frac{(\gamma_e+1)2^{R_s}-1}{\gamma_e}\\
\mathrm{s.t.}&\ \sum_n\exp\left(- b_n t_n\right)\leq\zeta\\
&\ \sum_n \frac{\gamma_e}{t_n}\leq P_T.
\end{align}
\end{subequations}

First, we focus on  the optimization of $\gamma_e$ under fixed $t_n$. This problem is equivalent to
\begin{equation} \label{sub_gammae}
\begin{split}
\min_{\gamma_e}&\ \frac{(\gamma_e+1)2^{R_s}-1}{\gamma_e}\\
\mathrm{s.t.} &\ \sum_n \frac{\gamma_e}{t_n}\leq P_T.
\end{split}
\end{equation}
It is clear that the objective function of (\ref{sub_gammae}) is decreasing with respect to $\gamma_e$. Therefore, (\ref{sub_gammae}) reaches its optimum when the inequality constraint is active. Hence, we have 
\begin{equation} \label{gamma_e_mid}
\gamma_e^*=\frac{P_T}{\sum_n 1/t_n}.
\end{equation}

Next, we aim to optimize $t_n$. 
With the approximation 
 $\gamma_c\approx\gamma_e2^{R_s}$ when $\gamma_e$ is much larger than $1$, the sub-optimization problem with respect to $t_n$ can therefore be written as 
 \begin{equation}
 \begin{split}
 \min_{t_n}&\ \sum{t_n}\\
 \mathrm{s.t.}&\ \sum_{t_n} \exp{(-b_n t_n)}\leq \zeta.
 \end{split}
 \end{equation}
This is a convex optimization problem and its globally optimal solution can be derived. Since the objective function is non-decreasing while the left-hand side of the constraint is non-increasing, this problem reaches its optimum when the inequality constraint is active. Using Lagrangian multiplier method, we can formulate the following function:
\begin{equation}
G=\sum_n t_n+\lambda\left\{\sum_n\exp\left(- b_n t_n\right)-\zeta\right\},
\end{equation}
where $\lambda>0$ represents the Lagrange multiplier.
Letting 
\begin{equation}
\frac{\partial{G}}{\partial{t_n}}=1-\lambda\exp(-\gamma_e b_n t_n)\gamma_e b_n=0,
\end{equation}
and considering the active constraint, we can finally derive 
\begin{align}
t_n&=\frac{1}{b_n}\ln\left(\sum_k\frac{b_n}{b_k\zeta}\right), \mathrm{and} \label{t_n_mid}\\
\lambda&=\frac{1}{\zeta}\sum_n b_n.
\end{align}
Then substituting (\ref{t_n_mid}) into (\ref{gamma_e_mid}) and using $t_n= \gamma_e/p_n$ and $\gamma_e=2^{R_t-R_s}-1$,
the expressions of the optimal
 $\gamma_e^*,  p_n^*$ and $R_t^*$ can be expressed as
\begin{align}
\gamma_e^*&=\frac{P_T}{\sum_n\frac{b_n}{\ln \left(b_n \sum_k \frac{1}{\zeta b_n}\right)}} \label{opt_gammae},\\
p_n^*&=\frac{\gamma_e^* b_n}{\ln\left(\sum_k\frac{b_n}{b_k\zeta}\right)},\label{opt_pn_secrecy}\ \mathrm{and}\\
R_t^*&=\log_2(\gamma_e^*+1)+R_s.\label{opt_rt_secrecy}
\end{align}

From (\ref{opt_gammae}), it is clear that $\gamma_e$ is a decreasing function of $\zeta$, which indicates that as the secrecy constraint getting relaxed (i.e. $\zeta$ increasing), less cost is required to confront the eavesdropping.


\subsection{Secrecy Rate Optimization}
Having obtained $p_n^*$ and $R_t^*$, now we deal with the optimization of $R_s$ under any given $N$.
The optimal $R_s^*$ that maximize $\Phi_S$ can be expressed as
\begin{equation}\label{subprob_rate_secrecy}
 \max_{R_s}\ \exp\left(-\frac{(\gamma_e^*+1)2^{R_s}-1}{\gamma_e^*}A_1\right)R_s.
\end{equation}
in which
\begin{equation}
A_1\triangleq\sum_n\frac{d_{tr}^\alpha \sigma^2 }{b_n} \ln\sum_k \frac{b_n}{b_k\zeta}.
\end{equation}

\begin{proposition}\label{property_quasi_concave}
Problem (\ref{subprob_rate_secrecy}) is quasi-concave.
\end{proposition}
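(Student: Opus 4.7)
The plan is to establish log-concavity of the objective, which is strictly stronger than quasi-concavity. Write
\begin{equation*}
f(R_s) \triangleq R_s\,\exp\left(-\frac{(\gamma_e^*+1)2^{R_s}-1}{\gamma_e^*}A_1\right),
\end{equation*}
and work on the open domain $R_s>0$, where $f(R_s)>0$ so the logarithm is well defined.

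First, I would take the logarithm to obtain
\begin{equation*}
g(R_s) \triangleq \ln f(R_s) = \ln R_s - \frac{A_1(\gamma_e^*+1)}{\gamma_e^*}\,2^{R_s} + \frac{A_1}{\gamma_e^*}.
\end{equation*}
The second step is a direct second-derivative check for strict concavity of $g$. Using $\tfrac{d^2}{dR_s^2}\ln R_s = -1/R_s^2$ and $\tfrac{d^2}{dR_s^2}\,2^{R_s} = (\ln 2)^2\,2^{R_s}$, one obtains
\begin{equation*}
g''(R_s) = -\frac{1}{R_s^2} - \frac{A_1(\gamma_e^*+1)(\ln 2)^2}{\gamma_e^*}\,2^{R_s}.
\end{equation*}
Both terms are strictly negative on $R_s>0$ (since $A_1>0$ and $\gamma_e^*>0$), so $g$ is strictly concave, hence $f$ is strictly log-concave and therefore quasi-concave on $(0,\infty)$. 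Extending to the closed domain $R_s\geq 0$ is routine: $f$ is continuous with $f(0)=0$, so every super-level set $\{R_s\geq 0 : f(R_s)\geq c\}$ is either all of $[0,\infty)$ (when $c\leq 0$) or a subset of $(0,\infty)$ that is an interval by the quasi-concavity already established.

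The main obstacle is minor: the only real insight is to target log-concavity rather than concavity of $f$ itself, which fails because the exponent contains a super-linearly growing $2^{R_s}$ term. Once the logarithm is taken, the argument reduces to a one-line sign check. For completeness I would also note that $A_1>0$, which follows from its definition as a sum of positive weights $d_{tr}^{\alpha}\sigma^2/b_n$ multiplied by the logarithms $\ln\sum_k b_n/(b_k\zeta)$, each of which is positive in the operating regime where $\zeta$ is a small secrecy-outage tolerance.
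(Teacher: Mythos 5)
Your proof is correct, but it takes a genuinely different route from the paper's. The paper computes the first derivative of the objective, sets it to zero to locate the stationary point $R_s^*=\frac{1}{\ln 2}W_0\bigl(\frac{\gamma_e^*}{A_1(\gamma_e^*+1)}\bigr)$, and then evaluates the second derivative \emph{at} the stationary point, showing it is negative there; quasi-concavity then follows from the (implicit) fact that a smooth one-variable function whose every critical point is a strict local maximum is unimodal. You instead pass to $g=\ln f$ and observe that $g''(R_s)=-1/R_s^2-\frac{A_1(\gamma_e^*+1)(\ln 2)^2}{\gamma_e^*}2^{R_s}<0$ globally, giving strict log-concavity, which is stronger than quasi-concavity and immediately yields uniqueness of the maximizer without any case analysis at critical points. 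Your route is cleaner as a proof of the proposition per se (a one-line global sign check versus a local second-order test whose logical completion the paper leaves unstated), and your side remarks — that $A_1>0$ because each $\ln\sum_k b_n/(b_k\zeta)>\ln(1/\zeta)>0$ for small $\zeta$, and the boundary treatment at $R_s=0$ — are correct and fill in details the paper glosses over. What the paper's approach buys in exchange is that the closed-form optimizer (\ref{rs_quasiconvex}) falls out of the same computation, which is needed in the subsequent step; under your approach that formula would still be obtained by solving $g'(R_s)=0$, which is the same Lambert-W equation, so nothing is actually lost.
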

\begin{proof}
First, we derive the first derivative of the objective function with respect to $R_s$, which can be written as
\begin{align}
\frac{\mathrm{d}}{\mathrm{d}R_s}\exp\left(-\frac{(\gamma_e^*+1)2^{R_s}-1}{\gamma_e^*}A_1\right)R_s
=\exp\left(-\frac{(\gamma_e^*+1)2^{R_s}-1}{\gamma_e^*}A_1\right)
\left(1- \frac{A_1\ln2}{\gamma_e^*}(\gamma_e^*+1) 2^{R_s}R_s\right). \nonumber
\end{align}
Let the first derivative equal to zero, and we obtain
\begin{align} \label{first_deri}
1- \frac{A_1}{\gamma_e^*}(\gamma_e^*+1)\ln2\cdot 2^{R_s}R_s=0,
\end{align}
which can be transformed as
\begin{equation}
\frac{\gamma_e^*}{A_1(\gamma_e^*+1)}=e^{\ln2\cdot R_s}\ln2\cdot R_s. \nonumber
\end{equation}
Hence, the solution of (\ref{first_deri}) is
\begin{equation}\label{rs_quasiconvex}
R_s^*=\frac{1}{\ln2}W_0\left(\frac{\gamma_e^*}{A_1(\gamma_e^*+1)}\right),
\end{equation}
where $W_0(\cdot)$ represents the principal branch of the Lambert W function.

The second derivative of the objective function with respect to $R_s$ can be expressed as
\begin{align}
&\frac{\mathrm{d^2}}{\mathrm{d}{R_s}^2}\exp\left(-\frac{(\gamma_e^*+1)2^{R_s}-1}{\gamma_e^*}A_1\right)R_s \nonumber\\
=&\exp\left(-\frac{(\gamma_e^*+1)2^{R_s}-1}{\gamma_e^*}A_1\right)
\left(-\frac{A_1(\gamma_e^*+1)}{\gamma_e^*}\ln2\cdot2^{R_s}\right)
\left(2+\ln2\left(1-\frac{A_1}{\gamma_e^*}(\gamma_e^*+1)2^{R_s}\right)R_s\right)\label{second_order},
\end{align}
When the first derivative equals to zero, due to (\ref{first_deri}), 
we have
\begin{equation}
2+\ln2\left(1-\frac{A_1}{\gamma_e^*}(\gamma_e^*+1)2^{R_s}\right)R_s=1+\ln2\cdot R_s>0,
\end{equation}
and (\ref{second_order}) is less than 0. 
Therefore,  (\ref{subprob_rate_secrecy}) is quasi-concave in $R_s$.$\hfill\blacksquare$
\end{proof}

Based on Proposition \ref{property_quasi_concave},
problem (\ref{subprob_rate_secrecy}) reaches its optimum when (\ref{first_deri}) holds, and the optimal secrecy rate $R_s^*$ can be calculated via (\ref{rs_quasiconvex}).

\subsection{Number of Hops Optimization}
Finally, the number of hops $N$ remains to be optimized.
Having obtained the expressions of the optimal $p_n^*$, $R_t^*$ and $R_s^*$ through (\ref{opt_pn_secrecy}), (\ref{opt_rt_secrecy}) and (\ref{rs_quasiconvex}), the throughput optimization can be rewritten as
\begin{equation}\label{subprob_n_secure}
\max_N\ \Phi_S=\frac{1}{N}\exp\left\{-\frac{A_1}{\gamma_e^*} \left[(\gamma_e^*+1)2^{\frac{1}{\ln2}W_0\left(\frac{\gamma_e^*}{A_1(\gamma_e^*+1)}\right)}-1\right]\right\}
\frac{1}{\ln2}W_0\left(\frac{\gamma_e^*}{A_1(\gamma_e^*+1)}\right),
\end{equation}
%
where $\gamma_e^*$ and $A_1$ are both functions of $N$.
The expression of $\Phi_S$ is quite complicated w.r.t $N$,
which makes it hard to get a specific closed-form expression for the optimal number of hops $N^*$. Nevertheless, $N$ can be derived via a one-dimensional searching in the set of positive integers, which is efficient for practice networks. The numerical results of $N^*$ are shown in Section V.

Till now, we have finished the optimization of security transmission problem. For any given number of hops $N$, the secrecy rate,
transmission rate, and transmit power can be optimized according to the closed-form expressions (\ref{rs_quasiconvex}), (\ref{opt_rt_secrecy}), and (\ref{opt_pn_secrecy})
sequentially, and $(N+2)$ values are calculated.
Supposing the optimal number of hops is searched from $1$ to $\tilde{N}$, the number of values to be computed in the entire procedure is in the order of $\mathcal{O}(\tilde{N}^2)$ .

\section{Covert Communication Optimization}
In this section, we discuss the solution of the 
covert communication optimization problem (\ref{detect_problm}).
Since $P_c$ is the function of $R_t$ and $p_n$, similar to (\ref{secure_steps}), (\ref{detect_problm}) can be solved by optimizing $N$, $R_t$, and $p_n$ sequentially, i.e.,
\begin{equation}
\max_{N,R_t,p_n}\frac{1}{N}P_c R_t \Leftrightarrow \max_N \frac{1}{N}\left(\max_{R_t}R_t\left(\max_{p_n}P_c\right)\right),
\end{equation}
The procedure to solve problem (\ref{detect_problm}) can be separated into three steps:
First, under any given $N$ and $R_t$, the transmit power $p_n$ is optimized; Then we derive the optimal $R_t$ for a given number of hops. Finally, the optimal $N$ is obtained to maximize the throughput.

\subsection{Transmit Power Optimization}

We first derive the expression of the relative entropy $D(Q_1^{(n)}||Q_0^{(n)})$. 
As previously stated, the possibility for the $n$-th air-to-ground channel being line-of-sight is $P_{los}(\theta_n)$ while for the path being NLoS is $P_{nlos}(\theta_n)$. Therefore, for the $n$-th hop, we have
\begin{align*}
D(Q_1^{(n)}||Q_0^{(n)})&=D\left[(P_{los}(\theta_n)Q_{1,los}^{(n)}+P_{nlos}(\theta_n)Q_{1,nlos}^{(n)})||Q_0^{(n)}\right],
\end{align*}
where 
$Q_{1,los}^{(n)}$ ($Q_{1,nlos}^{(n)}$) denote the joint probability distribution of the received signals on the $n$-th hop when the air-to-ground channel is LoS (NLoS) and $H_1$ is true.
Furthermore, due to the convexity of $F(t)\triangleq t\ln\frac{t}{a}$, we can derive
\begin{align}
&\ \quad D(Q_1^{(n)}||Q_0^{(n)})\nonumber\\
&=\int \left[P_{los}(\theta_n)Q_{1,los}^{(n)}(x)+P_{nlos}(\theta_n)Q_{1,nlos}^{(n)}(x)\right]\ln\frac{P_{los}(\theta_n)Q_{1,los}^{(n)}(x)+P_{nlos}(\theta_n)Q_{1,nlos}^{(n)}(x)}{Q_0(x)}dx \nonumber\\
&\leq \int \left[P_{los}Q_{1,los}^{(n)}(x)\ln\frac{Q_{1,los}^{(n)}(x)}{Q_0(x)} + P_{nlos}Q_{1,nlos}^{(n)}(x)\ln\frac{Q_{1,nlos}^{(n)}(x)}{Q_0(x)}\right]dx \label{covert_jensen1}\\
&= P_{los}(\theta_n)D(Q_{1,los}^{(n)}||Q_0^{(n)})+P_{nlos}(\theta_n)D(Q_{1,nlos}^{(n)}||Q_0^{(n)}) \nonumber\\
&=L P_{los}(\theta_n)D(q_{1,los}^{(n)}||q_0^{(n)})+L P_{nlos}(\theta_n)D(q_{1,nlos}^{(n)}||q_0^{(n)}).\label{covert_jensen2}
\end{align}
 where $q_{1,los}^{(n)}$, $q_{1,nlos}^{(n)}$ and $q_0^{(n)}$ represent the probability distribution of one received signal at the UAV on the $n$-th hop when the air-to-ground channel are LoS, NLoS and when no transmission happens, respectively. 
(\ref{covert_jensen1}) holds due to the Jensen's inequality, and (\ref{covert_jensen2}) holds for the independence of received signals that are transmitted through the same hop. 


Now we can derive the specific expression of the relative entropy.
For two complex Gaussian distributions $p= \mathcal{CN}(\mu_p,\sigma^2_p)$ and $q=\mathcal{CN}(\mu_q,\sigma^2_q)$, 
$D(p||q)$ can be calculated via \cite{covert_multihop}

\begin{equation}\label{entropy_calculate}
D(p||q)=\ln\frac{\sigma^2_q}{\sigma^2_p}+
\frac{\sigma^2_p}{\sigma^2_q}+\frac{(\mu_q-\mu_p)^2}{\sigma^2_q}-1.
\end{equation}
Therefore, we can derive 
\begin{align*}
D(q_{1,los}^{(n)}||q_0^{(n)})&=\frac{p_n}{d_{nu}^{\beta_1}\sigma^2}-\ln\left(1+\frac{p_n}{d_{nu}^{\beta_1}\sigma^2}\right)\quad \mathrm{and}\\
D(q_{1,nlos}^{(n)}||q_0^{(n)})&=\frac{\eta p_n}{d_{nu}^{\beta_2}\sigma^2}-\ln\left(1+\frac{\eta p_n}{d_{nu}^{\beta_2}\sigma^2}\right).
\end{align*}
Using $\ln(1+x)\geq x-x^2/2$ when $x\geq0$, we can get
\begin{align*}
D(q_{1,los}^{(n)}||q_0^{(n)})&\leq \frac{1}{2}\left(\frac{p_n}{d_{nu}^{\beta_1}\sigma^2}\right)^2\quad \mathrm{and}\\
D(q_{1,nlos}^{(n)}||q_0^{(n)})&\leq\frac{1}{2}\left(\frac{\eta p_n}{d_{nu}^{\beta_2}\sigma^2}\right)^2.
\end{align*}
As a result, 
\begin{align}
D(Q_1||Q_0)&=\sum_n D(Q_1^{(n)}||Q_0^{(n)})\\
&\leq \sum_n\left[ P_{los}(\theta_n)D(Q_{1,los}^{(n)}||Q_0^{(n)})+P_{nlos}(\theta_n)D(Q_{1,nlos}^{(n)}||Q_0^{(n)})\right]\\
&\leq \frac{L}{2} \sum_n \left[P_{los}(\theta_n)\left(\frac{p_n}{d_{nu}^{\beta_1}\sigma^2}\right)^2+P_{nlos}(\theta_n) \left(\frac{\eta p_n}{d_{nu}^{\beta_2}\sigma^2}\right)^2\right],\label{entropy_express}
\end{align}
The constraint can be rewritten as
\begin{equation}
\frac{L}{2} \sum_n \left[P_{los}(\theta_n)\left(\frac{p_n}{d_{nu}^{\beta_1}\sigma^2}\right)^2+P_{nlos}(\theta_n) \left(\frac{\eta p_n}{d_{nu}^{\beta_2}\sigma^2}\right)^2\right]\leq 2\epsilon^2,
\end{equation}
which is much stricter than (\ref{entropy_constraint}).


Now we tackle the transmit power optimization problem under any given $N$ and $R_t$. The aim of this subproblem is to maximize the connection probability $P_c$, which can be rewritten as
\begin{equation}
\begin{split}
\max_{p_n}\ & \exp\left(-\sum_n\frac{\gamma_c d^\alpha_{tr}\sigma^2}{p_n}\right)\\
\mathrm{s.t.}\ &\frac{L}{2} \sum_n \left[P_{los}(\theta_n)\left(\frac{p_n}{d_{nu}^{\beta_1}\sigma^2}\right)^2+P_{nlos}(\theta_n) \left(\frac{\eta p_n}{d_{nu}^{\beta_2}\sigma^2}\right)^2\right]\leq 2\epsilon^2,\\
 &\sum_n p_n\leq P_T,
\end{split}
\end{equation}
and is equivalent to
\begin{equation}\label{subprob_power_poly}
\begin{split}
\min_{p_n}&\ \sum_n\frac{1}{p_n}\\
\mathrm{s.t.}&\ \frac{L}{2} \sum_n \left[P_{los}(\theta_n)\left(\frac{p_n}{d_{nu}^{\beta_1}\sigma^2}\right)^2+P_{nlos}(\theta_n) \left(\frac{\eta p_n}{d_{nu}^{\beta_2}\sigma^2}\right)^2\right]\leq 2\epsilon^2,\\
&\ \sum_n p_n\leq P_T.
\end{split}
\end{equation}
This is a convex problem with respect to $p_n$ and its globally optimal solution can be obtained numerically.
From (\ref{subprob_power_poly}) we can discover that with a larger value of $\epsilon$, the nodes would transmit signals with larger power, which would increase the possibility that the UAV makes correct decision.

\subsection{Transmission Rate and Number of Hops Optimization}
With the optimal transmission power $\tilde{p}_n^*$, now we solve the
optimization problem of the transmission rate $R_t$ and the number of hops $N$.

The sub-problem toward the optimization of $R_t$ for a fixed $N$ can be expressed as
\begin{equation}
 \max_{R_t}\ \exp\left(-\left(2^{R_t}-1\right)A_2\right)R_t,
\end{equation}
where
$A_2\triangleq \sum_n\frac{d^\alpha_{tr}\sigma^2}{\tilde{p}_n^*}$.
This is a quasi-concave problem and reaches its optimum when its first derivative equals to zero. The proof is similar to that of Proposition 1, and therefore is omitted here. The optimal transmission rate $\tilde{R}_t^*$ can be expressed as
\begin{equation}\label{opt_rate}
\tilde{R}_t^*=\frac{1}{\ln2}W_0\left(\frac{1}{A_2}\right).
\end{equation}

Having obtained the expressions of $\tilde{p}_n^*$ and $\tilde{R}_t^*$, the throughput optimization problem can be rewritten as
\begin{equation}
 \max_{N}\ \frac{1}{N}\exp\left(-\left(2^{\left(\frac{1}{\ln2}W_0\left(\frac{1}{A_2}\right)\right)}-1\right)  A_2 \right)\frac{1}{\ln2}W_0\left(\frac{1}{A_2}\right).
\end{equation}
Though this problem is too complicated to derive a specific closed-form solution, the optimal number of hops can be obtained via a one-dimensional searching in the set of positive integers, which is efficient for practice networks.

\section{Numerical Results}

In this section, we present the numerical results to illustrate the performance of the proposed two methods toward the secrecy and covert communication problems, respectively.
We suppose that the distance between the source and the destination nodes is $D=300$m. The UAV is located at a height of $h$, exactly above the midpoint between the source and the destination nodes, and the path loss exponents of the terrestrial and air-to-ground channels are $\alpha=3$, $\beta_1=2.5$, and $\beta_2=2.8$, respectively \cite{height_pathloss}.
Considering an urban environment, the parameters $B$ and $C$ are set as $B=0.136$ and $C=11.95$ under $f=2,000$MHz with the excessive path-loss coefficient $\eta = -20$dB \cite{channel_angel1, opt_lap_altitude}.
The working bandwidth of the system is $10$MHz, with the noise power $\sigma_0^2=-110$dBm and the path loss at reference distance $d_0=1$m is set as $\lambda_0=-40$dB. Thus, we can obtain $\sigma^2=-70$dBm.

\subsection{Performance of the Secrecy Communication Optimization Scheme}

\begin{figure}
\centering
\includegraphics[width=3.5in]{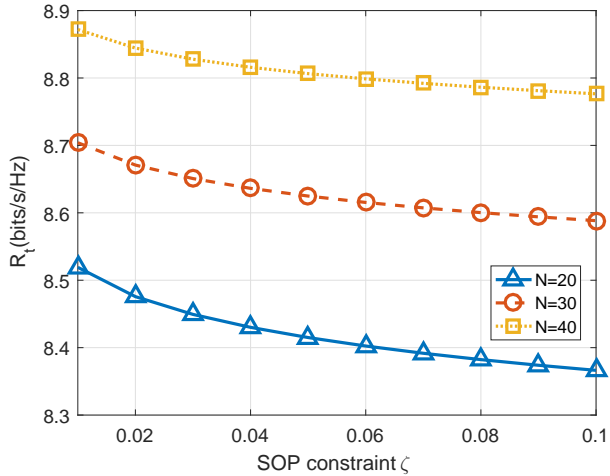}
\caption{The transmission rate $R_t$ versus the secrecy outage probability constraint $\zeta$ under fixed number of hops $N$ and $h=300$m. }
\label{fig_fixN_rt}
\end{figure}

\begin{figure}
\centering
\includegraphics[width=3.5in]{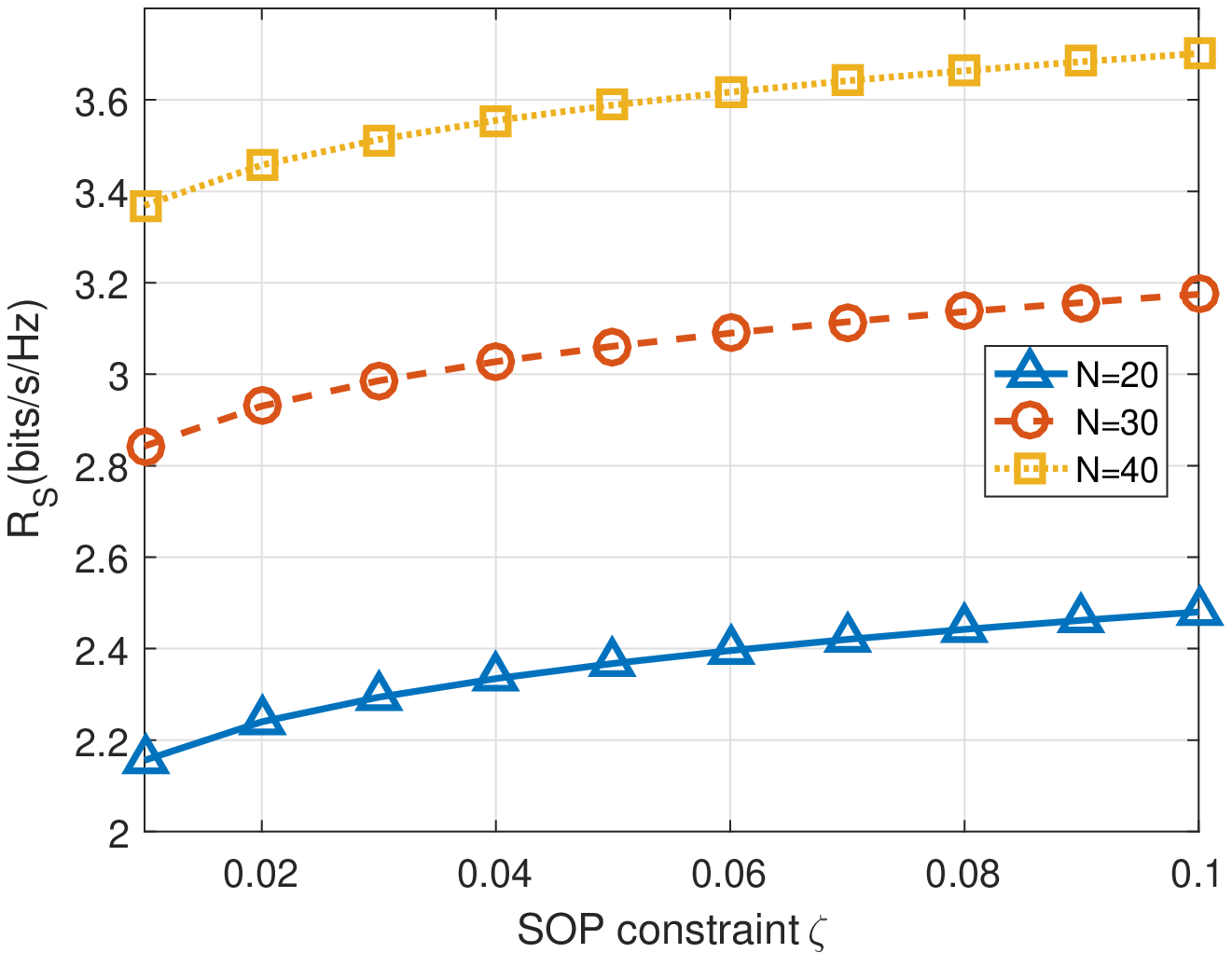}
\caption{The secrecy rate $R_s$ versus the secrecy outage probability constraint $\zeta$ under fixed number of hops $N$ and $h=300$m.}
\label{fig_fixN_rs}
\end{figure}

\begin{figure}
\centering
\includegraphics[width=3.5in]{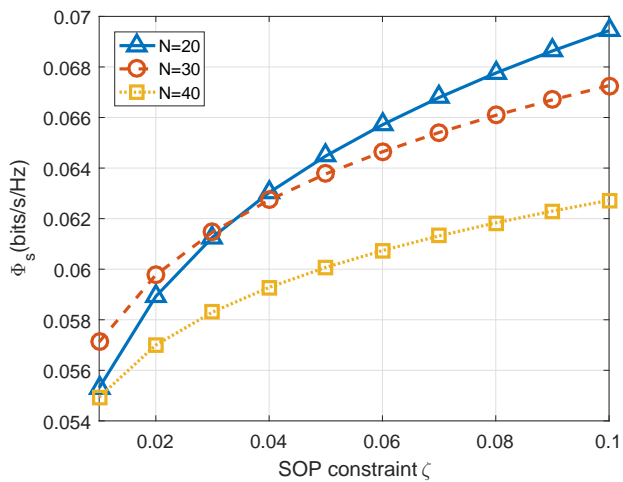}
\caption{The secrecy throughput $\Phi_S$ versus the secrecy outage probability constraint $\zeta$ under fixed number of hops $N$ and $h=300$m.}
\label{fig_fixN_throughput}
\end{figure}

\begin{figure}
	\centering
	\includegraphics[width=3.5in]{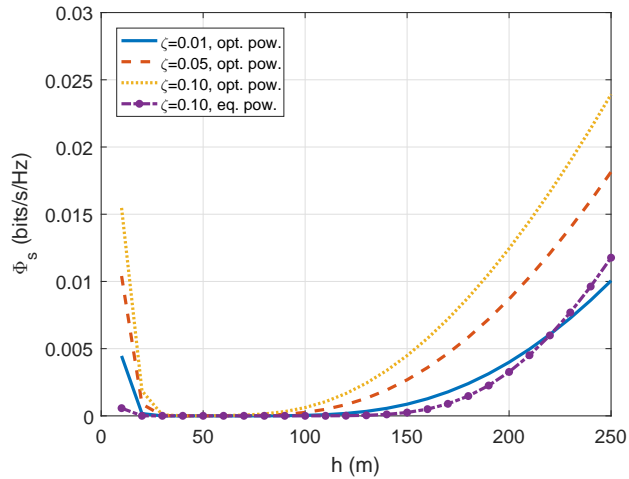}
	\caption{The secrecy throughput $\Phi_S$ versus the height of UAV $h$ when $N=7$.}
	\label{fig_throughput_height}
\end{figure}

\begin{figure}
\centering
\includegraphics[width=3.5in]{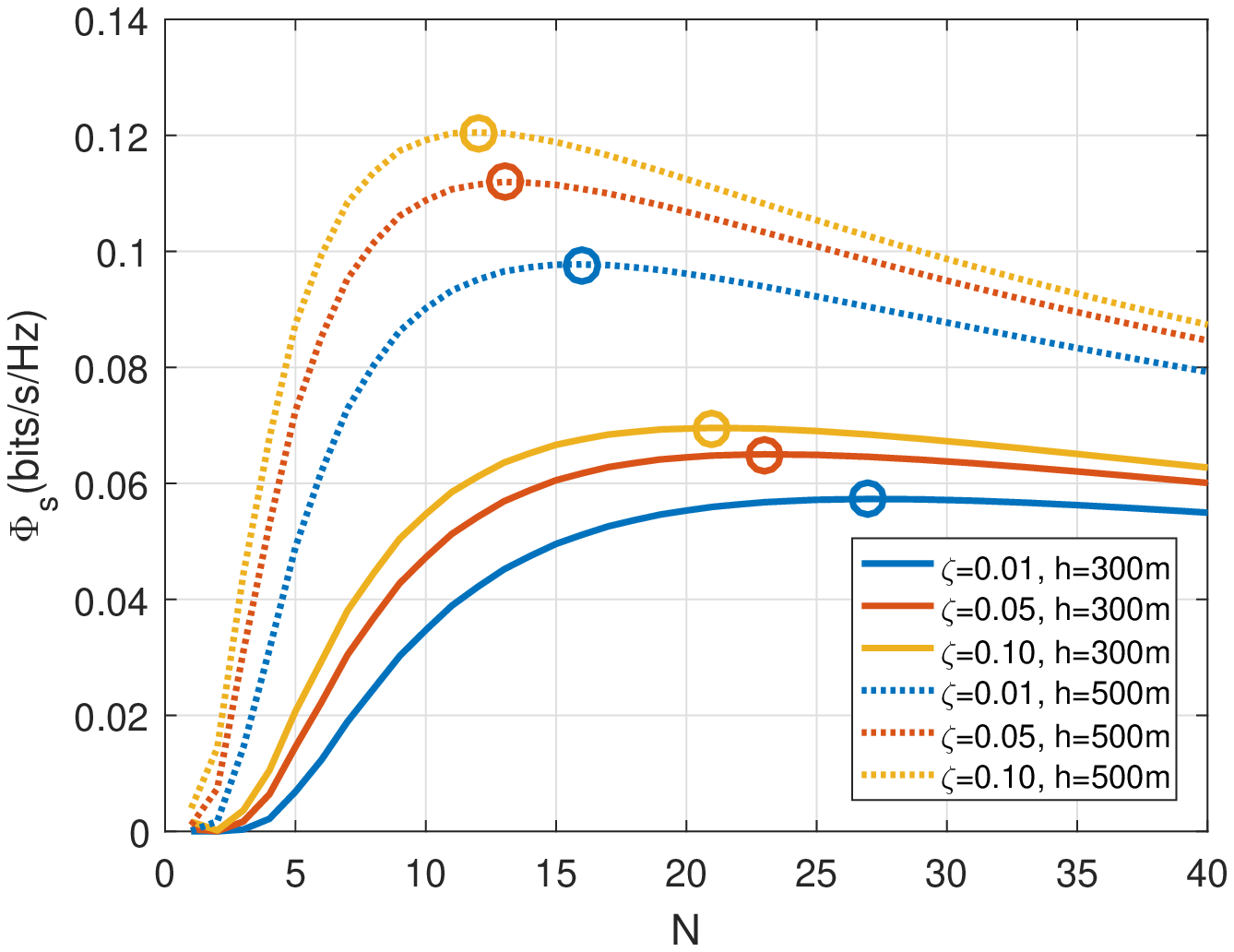}
\caption{The secrecy throughput $\Phi_S$ versus the number of hops $N$ under fixed secrecy outage probability constraint $\zeta$. The maximal values of $\Phi_S$ are marked using circles.}
\label{fig_N_throughput}
\end{figure}

In this part, we discuss the performance of the secrecy communication optimization scheme. 
The influence of the system parameters toward the optimal transmission rate, secrecy rate and the secrecy throughput is first investigated. Fig. \ref{fig_fixN_rt} and  \ref{fig_fixN_rs} depict the optimized transmission rate and  secrecy rate versus the variation of the SOP threshold $\zeta$ under different numbers of hops $N$.
As $\zeta$ increases, the transmission rate $R_t$ decreases while the secrecy rate $R_s$ increases.
The relaxation of the secrecy outage requirement enables the legitimate nodes to decrease $R_e$, which reflects the cost toward securing the confidential transmission against eavesdropping. Meanwhile, the increase of $R_s$ indicates that more information can be transmitted securely.
Besides,
as the number of hops $N$ increases, the distance between two adjacent nodes becomes shorter, leading to a decrease of required transmit power for a successful connection. In such cases, secrecy outage is less likely to happen, and both $R_t$ and $R_s$ can be enhanced so as to transmit more information.

The throughput $\Phi_S$ versus $\zeta$ under different values of $N$ are calculated according to (\ref{def_throughput})
 and the results are
 plotted in Fig. \ref{fig_fixN_throughput}.
It is clear that the throughput increases with the secrecy outage constraint being relaxed.
Besides, among the three choices of $N$, the network throughput does not always reaches the maximum at the biggest value of $N$.
Specifically, when $\zeta>0.03$, the maximal throughput is achieved as $N=30$.
This indicates the existence of an optimal value of $N$ to provide the optimal trade-off.

Fig. \ref{fig_throughput_height} reflects the impact of UAV's height on the secrecy throughput with different $\zeta$ when the number of hops is fixed as $N=7$. 
With the increase of UAV's height, the throughput first decreases and then increases gradually. This is because when the UAV has a low altitude, as $h$ increases, the throughput grows lower due to the better qualities of wiretapping channels. As $h$ keeps increasing, however, the influence of path loss becomes prominent. Hence the qualities of received signals at the UAV get worse, and more bits can be transmitted.

For comparison, we also consider a benchmark scheme (denoted as eq. pow.) in which all terrestrial transmitter have equal power, and the transmit power, $R_t$, $R_s$ and $N$ are optimized following a procedure similar to the proposed method in Section III (denoted as opt. pow.). As shown in Fig. \ref{fig_throughput_height},
for secrecy communications, our proposed scheme with different transmit power outperforms that with equal power for each node.

Then we investigate the relation between the number of hops $N$ and the throughput $\Phi_S$ and aim to find the optimal number of hops. As plotted in Fig. \ref{fig_N_throughput}, as $N$ increases, $\Phi_S$ first increases and then decreases. The maximal value of the throughput is marked with a circle on each curve in the figure, and the corresponding $N$ is exactly the optimal number of hops.
For example, when $\zeta=0.1, h=300$, the maximal throughput is obtained when there exist $21$ hops in the linear relaying network.

\subsection{Performance of the Covert Communication Optimization Scheme}

\begin{figure}
\centering
\includegraphics[width=3.5in]{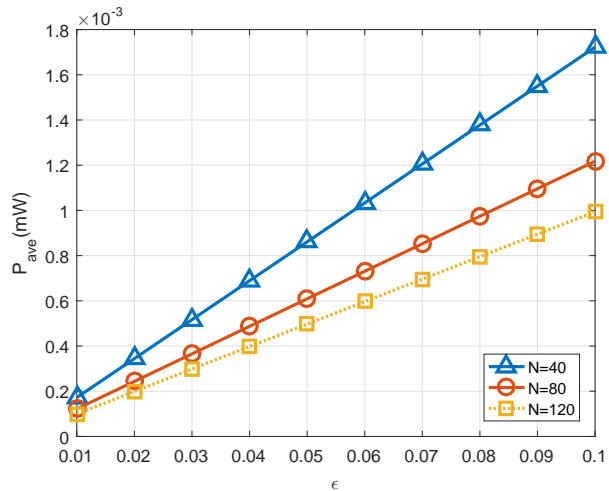}
\caption{The average transmit power $P_{ave}$ versus the detection probability constraint $\kappa$ under fixed number of hops $N$ and $h=300$m.}
\label{fig_fixN_power_covert}
\end{figure}

\begin{figure}
\centering
\includegraphics[width=3.5in]{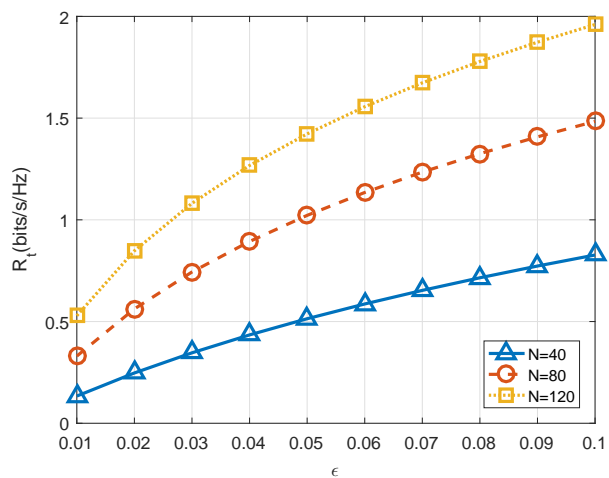}
\caption{The transmission rate $R_t$ versus the detection probability constraint $\kappa$ under fixed number of hops $N$ and $h=300$m.}
\label{fig_fixN_rate_covert}
\end{figure}

\begin{figure}
\centering
\includegraphics[width=3.5in]{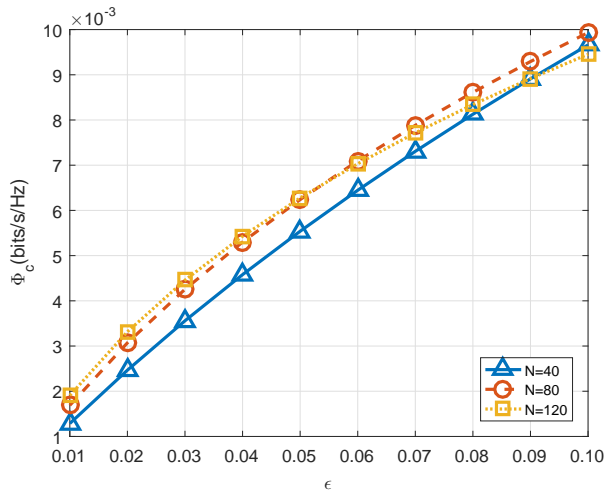}
\caption{The transmit throughput $\Phi_C$ versus the detection probability constraint $\kappa$ under fixed number of hops $N$ and $h=300$m.}
\label{fig_fixN_throughput_covert}
\end{figure}

\begin{figure}
	\centering
	\includegraphics[width=3.5in]{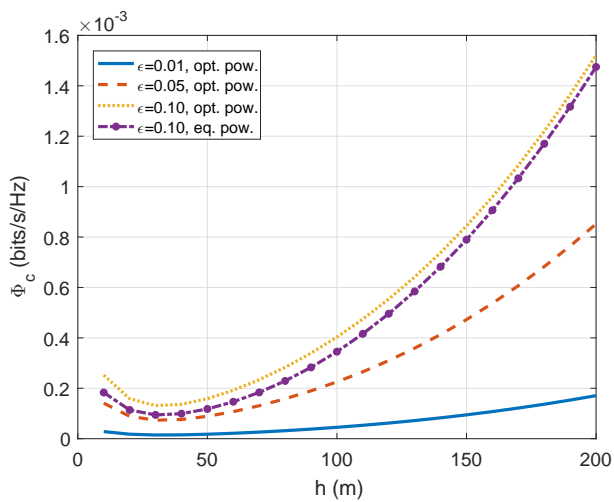}
	\caption{The transmit throughput $\Phi_c$ versus the height of UAV $h$ when $N=3$.}
	\label{fig_covert_throughput_height}
\end{figure}

\begin{figure}
\centering
\includegraphics[width=3.5in]{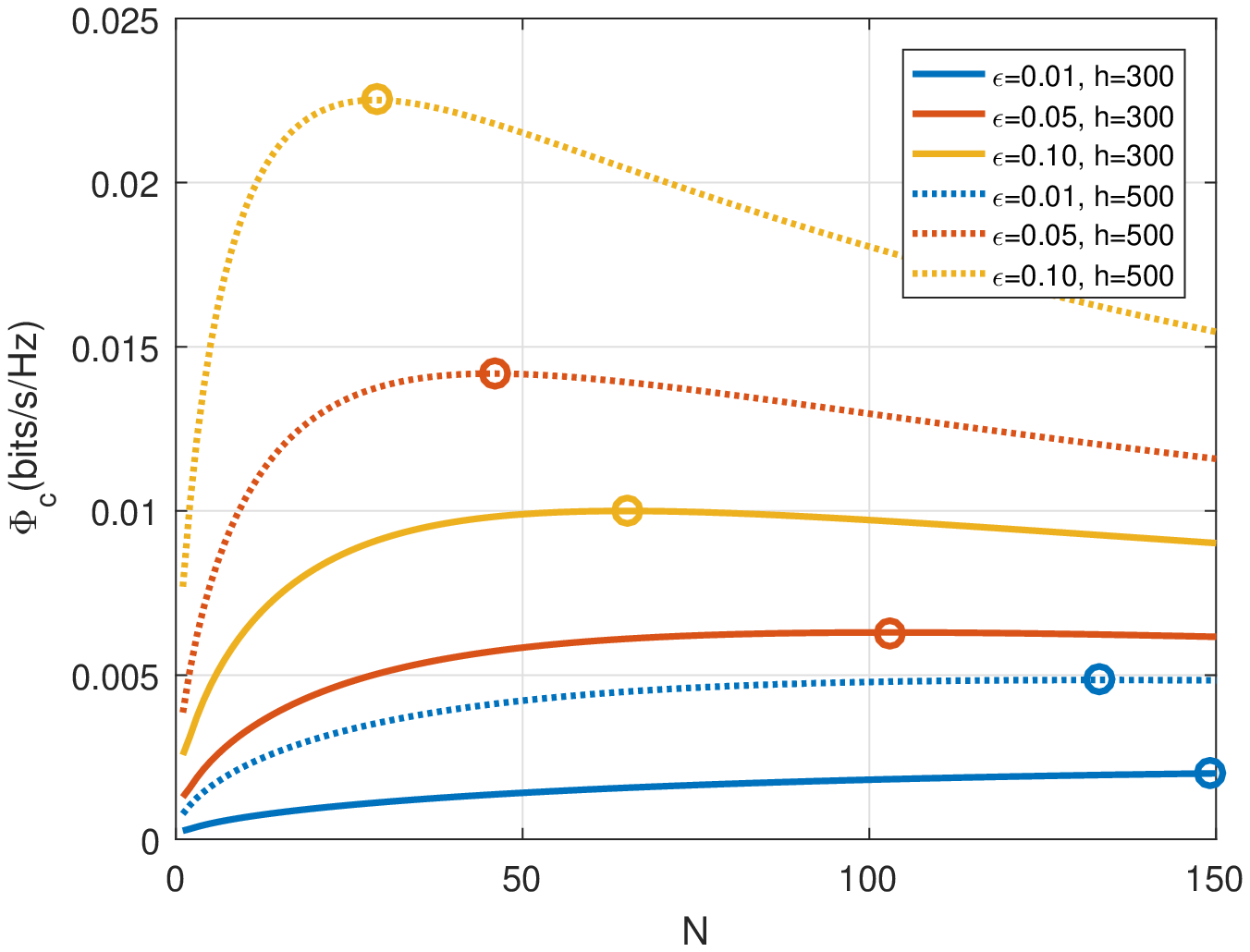}
\caption{The transmit throughput $\Phi_C$ versus the number of hops $N$ under fixed detection probability constraint $\kappa$. The maximal values of $\Phi_C$ are marked using circles.}
\label{fig_N_throughput_covert}
\end{figure}

The performance of the covert communication optimization scheme is discussed in this part.
We assume that the length of the code word $L=10$.
The influence of the system parameters toward optimized variables is studied.
The optimal transmit power, transmission rate and throughput versus  $\epsilon$ over different values of $N$ are plotted in
Fig. \ref{fig_fixN_power_covert},
Fig. \ref{fig_fixN_rate_covert}, and
Fig. \ref{fig_fixN_throughput_covert}, respectively.
The average transmit power, defined as $P_\mathrm{ave}\triangleq\frac{1}{N}\sum_{n}p_n$, increases as $\epsilon$ becomes larger while decreases as $N$ increases.
The increase of $\epsilon$ indicates a looser constraint on the detection error rate, which allows the transmitters to enhancing their transmit power without violating the covert transmission constraint.
As $N$ increases, the distance for one hop becomes shorter, and less power is required to transmit the signal.
In Fig. \ref{fig_fixN_rate_covert}, as  $\epsilon$ increases, $R_t$ becomes higher accordingly. 
The relaxation of covert constraint enables the increase of transmit power, which leads to the enhancement of received SNR at each terrestrial receiver and a higher channel capacity. Hence more bits can be transmitted.



The throughput $\Phi_C$
is calculated
and the results are depicted in Fig. \ref{fig_fixN_throughput_covert}.
It is clear that the throughput performance does not always becomes better with the increase of the number of hops. In particular, when $\epsilon$ is higher than $0.05$, $\Phi_C$ reaches the maximum when there are $80$ hops among the three provided values of $N$,
while increasing $N$ to $120$ could not lead to a better performance. This indicates the existence of an optimal number of hops.

The impact of UAV's height on the transmit throughput with different $\zeta$ and fixed $N$ can be reflected from Fig. \ref{fig_covert_throughput_height}. 
As $h$ increases from a very low altitude, the throughput first becomes smaller due to the better qualities of wiretapping channels. Then as $h$ keeps increasing, path loss becomes the dominant impact that worsens the qualities of UAV's received signals. Hence the throughput gradually increases.

In Fig. \ref{fig_covert_throughput_height}, we also compare the proposed scheme in Section IV (denoted as opt. pow.) with a benchmark scheme (denoted as eq. pow.) in which each terrestrial transmitter has identical power and the transmit power, $R_t$ and $N$ are optimized following a procedure similar to the proposed method.
As the numerical results show, the scheme with optimized transmit power outperforms that with equal power for each node in covert communication.

The relation between the number of hops $N$ and the throughput $\Phi_C$ is studied and the optimal numbers of hops under different values of  $\epsilon$ are derived. As depicted in Fig. \ref{fig_N_throughput_covert}, $\Phi_C$ first increases and then decreases with the increase of $N$. 
The optimal values for different values of  $\epsilon$ are marked on the curves. $\Phi_C$ reaches its optimum at $N=133,46,$ and $29$ when $h=500$m and $\kappa=0.01,0.05$ and $0.1$, respectively.

\section{Conclusion}
In this paper, we considered the problem of secure communication and covert communication for
a pair of  terrestrial legitimate users.
To work against the surveillance of the UAV, multiple relays were deployed to assist the transmission.
Considering the trade-off between the secrecy/covertness and the efficiency of the multi-hop transmission, the optimization of the transmit power, coding rates, and the number of hops was investigated to maximize the throughput.
In the secrecy problem, the expressions of the connection probability and the secrecy outage probability were derived.
The optimal transmit power and coding rates could be obtained via closed-form expressions and the optimal number of hops was also evaluated.
For covert communication problem, the UAV's detection error rate was constrained. 
The sub-problem of transmit power optimization was convex and the global optimal solution could be derived numerically. 
Numerical simulation illustrated the performance of multi-hop networks.
Since the secrecy and covert communication problems were discussed under different assumptions and had different targets, comparing the results under the two different conditions was meaningless. 


\linespread{1.2}


\begin{thebibliography}{99}


\bibitem{survey_RZHANG}Y. Zeng, R. Zhang and T. J. Lim, ``Wireless communications with unmanned aerial vehicles: opportunities and challenges,'' {\it{IEEE Commun. Mag.}}, vol. 54, no. 5, pp. 36-42, May 2016.




\bibitem{uav_trans_receive1}
L. Zhou, Z. Yang, S. Zhou and W. Zhang, ``Coverage probability analysis of UAV cellular networks in urban environments,'' {\it{2018 IEEE ICC Workshops}}, Kansas City, MO, 2018, pp. 1-6.

\bibitem{uav_trans_receive2}
X. Wang, H. Zhang and V. C. M. Leung, ``Modeling and performance analysis of UAV-assisted cellular networks in isolated regions,'' {\it{2018 IEEE ICC Workshops}}, Kansas City, MO, 2018, pp. 1-6.

\bibitem{uav_broadcast}
Q. Wu, J. Xu and R. Zhang, ``UAV-Enabled broadcast channel: trajectory design and capacity characterization,'' {\it{2018 IEEE ICC Workshops}}, Kansas City, MO, 2018, pp. 1-6.



\bibitem{uav_relay1}
S. Zeng, H. Zhang, K. Bian and L. Song, ``UAV relaying: power allocation and trajectory optimization using decode-and-forward protocol,'' {\it{2018 IEEE ICC Workshops}}, Kansas City, MO, 2018, pp. 1-6.






\bibitem{uav_relay3}
R. Fan, J. Cui, S. Jin, K. Yang and J. An, ``Optimal node placement and resource allocation for UAV relaying network,'' {\it{ IEEE Commun. Lett.}}, vol. 22, no. 4, pp. 808-811, April 2018.


\bibitem{channel_freespace1}
Y. Zeng, R. Zhang and T. J. Lim, ``Throughput maximization for UAV-enabled mobile relaying systems,'' {\it{ IEEE Trans. Commun.}}, vol. 64, no. 12, pp. 4983-4996, Dec. 2016.

\bibitem{channel_freespace2}
G. Zhang, Q. Wu, M. Cui and R. Zhang, ``Securing UAV communications via trajectory optimization,'' {\it{2017 IEEE Global Communications Conference}}, Singapore, 2017, pp. 1-6.


\bibitem{channel_small1}
S. Zhang, H. Zhang, Q. He, K. Bian and L. Song, ``Joint trajectory and power optimization for UAV relay networks,'' {\it{ IEEE Commun. Lett.}}, vol. 22, no. 1, pp. 161-164, Jan. 2018.
\bibitem{channel_small2}
J. Li and Y. Han, ``Optimal resource allocation for packet delay minimization in multi-layer UAV networks,'' {\it{ IEEE Commun. Lett.}}, vol. 21, no. 3, pp. 580-583, March 2017.

\bibitem{nakagami}
J. Baek, S. I. Han and Y. Han, ``Optimal resource allocation for non-orthogonal transmission in UAV relay systems,'' {\it{ IEEE Wireless Commun. Lett.}}, vol. 7, no. 3, pp. 356-359, June 2018.

\bibitem{rician}
 M. M. Azari, F. Rosas, K. C. Chen and S. Pollin, ``Optimal UAV positioning for terrestrial-aerial communication in presence of fading,'' {\it{2016 IEEE Global Communications Conference (GLOBECOM)}}, Washington, DC, 2016, pp. 1-7.


\bibitem{channel_angel1}
M. Mozaffari, W. Saad, M. Bennis and M. Debbah, ``Unmanned aerial vehicle with underlaid device-to-device communications: performance and tradeoffs,'' {\it{ IEEE Trans. Wireless Commun.}}, vol. 15, no. 6, pp. 3949-3963, June 2016.

\bibitem{opt_lap_altitude}
A. Al-Hourani, S. Kandeepan and S. Lardner, ''Optimal LAP Altitude for Maximum Coverage,'' {\textit{ IEEE Wireless Communications Letters}}, vol. 3, no. 6, pp. 569-572, Dec. 2014.


\bibitem{Liu_uav_active_eve}
C. Liu, T. Q. S. Quek and J. Lee, ``Secure UAV communication in the presence of active eavesdropper,'' {\it{ 2017 9th International Conference on Wireless Communications and Signal Processing (WCSP)}}, Nanjing, 2017, pp. 1-6.

\bibitem{PELE}
X. Wang, K. Li, S. S. Kanhere, D. Li, X. Zhang and E. Tovar, ``PELE: Power efficient legitimate eavesdropping via jamming in UAV communications,'' {\it{2017 13th International Wireless Communications and Mobile Computing Conference (IWCMC)}}, Valencia, 2017, pp. 402-408.

\bibitem{uav_eave}
Q. Song, S. Jin, F. Zheng, S. Zhang, ``UAV Wireless Information Surveillance via Proactive Eavesdropping,'' {\it{Proc. Int. Conf. Wireless Commun. Signal Process. (WCSP)}}, Nanjing, China, 2017, pp. 1--6.

\bibitem{hwang_phsuav}
H.-M.Wang, X. Zhang, and J.-C. Jiang,``UAV-involved wireless physical-layer secure communications: overview and research directions'', IEEE Wireless Communications, Oct. 2019, pp.2--9.

\bibitem{wyner}
A. D. Wyner, ``The wire-tap channel,'' {\it{ Bell Syst. Tech.J.}}, vol. 54, no. 8, pp. 1355-1387, Oct. 1975.


\bibitem{Yang2015Safeguading}
N. Yang, L. Wang, G. Geraci, M. Elkashlan, J. Yuan, and M. D. Renzo, ``Safeguarding 5G wireless communication networks using physical tier security,'' \emph{IEEE Commun. Mag.}, vol. 53, no. 4, pp. 20--27, Apr. 2015.




\bibitem{Wang2016Physical}
H.-M. Wang, T.-X. Zheng, J. Yuan, D. Towsley, and M. H. Lee,
``Physical layer security in heterogeneous cellular networks,'' \emph{IEEE Trans. Commun.}, vol. 64, no. 3, pp. 1204--1219, Mar. 2016.

\bibitem{Wang2016Physical_book}
H.-M. Wang and T.-X. Zheng, \emph{Physical Layer Security in Random Cellular Networks}. Singapore: Springer Press, 2016.

\bibitem{Wang2015Enhancing}
H.-M. Wang and X.-G. Xia, ``Enhancing wireless secrecy via cooperation: Signal design and optimization,'' \emph{IEEE Commun. Mag.}, vol. 53, no. 12,
pp. 47--53, Dec. 2015.

\bibitem{R1_review1}
G. Zhang, Q. Wu, M. Cui and R. Zhang, ``Securing UAV Communications via Joint Trajectory and Power Control," \emph{IEEE Trans.  Wireless Commun.}, vol. 18, no. 2, pp. 1376-1389, Feb. 2019.



\bibitem{R2_review1}
M. Cui, G. Zhang, Q. Wu and D. W. K. Ng, ``Robust Trajectory and Transmit Power Design for Secure UAV Communications," \emph{ IEEE Transactions on Vehicular Technology}, vol. 67, no. 9, pp. 9042-9046, Sept. 2018.

\bibitem{R3_review1}
H. Lu, H. Zhang, H. Dai, W. Wu and B. Wang, ``Proactive Eavesdropping in UAV-Aided Suspicious Communication Systems," \emph{IEEE Transactions on Vehicular Technology}, vol. 68, no. 2, pp. 1993-1997, Feb. 2019.


%
%







\bibitem{limit_awgn}
 B. A. Bash, D. Goeckel, and D. Towsley, ``Limits of reliable communication with low probability of detection on AWGN channels,'' {\it{ IEEE J. Sel. Areas Commun.}}, vol. 31, no. 9, pp. 1921-1930, Sep. 2013.



\bibitem{memoryless}
L. Wang, G. W. Wornell, and L. Zheng, ``Fundamental limits of communication with low probability of detection,'' {\it{ IEEE Trans. Inf. Theory}}, vol. 62, no. 6, pp. 3493-3503, Jun. 2016.


\bibitem{binary}
 P. H. Che, M. Bakshi, and S. Jaggi, ``Reliable deniable communication: Hiding messages in noise,'' {\it{  Proc. IEEE Int. Symp. Inf. Theory}}, Jul. 2013, pp. 2945-2949.



\bibitem{noise_uncertain}
B. He, S. Yan, X. Zhou, and V. K. N. Lau, ``On covert communication with noise uncertainty,'' {\it{ IEEE Commun. Lett.}}, vol. 21, no. 4, pp. 941-944, Apr. 2017.

\bibitem{channel_uncertain}
K. Shahzad, X. Zhou, and S. Yan, ``Covert communication in fading channels under channel uncertainty,'' {\it{ Proc. IEEE VTC Spring}}, Jun. 2017, pp. 1-5.



\bibitem{finite_length}
S. Yan, B. He, Y. Cong, and X. Zhou, ``Covert communication with finite blocklength in AWGN channels,'' {\it{Proc. IEEE ICC}}, May 2017, pp. 1-6.


\bibitem{covert_multihop}
A. Sheikholeslami, M. Ghaderi, D. Towsley, B. A. Bash, S. Guha and D. Goeckel, ``Multi-hop routing in covert wireless networks,'' {\it{IEEE Trans. Wireless Commun.}}, vol. 17, no. 6, pp. 3656-3669, June 2018.


\bibitem{R1_review3}
Q. Wu, W. Mei, and R. Zhang
``Safeguarding Wireless Networks with UAVs: A Physical Layer Security Perspective,’’ {\it{IEEE Wireless Communications}}, 2019, to appear.

\bibitem{adhoc1}
Y. Xu, J. Liu, O. Takahashi, N. Shiratori and X.Jiang, ``SOQR: secure optimal QoS routing in wireless ad-hoc networks,''
{\it{IEEE WCNC}}, pp. 1-6, Mar. 2017.


\bibitem{adhoc12}
H.-M. Wang, Y. Zhang, D. W. K. Ng and M. H. Lee, ``Secure routing with power optimization for Ad-hoc networks,'' {\it{ IEEE Trans.  Commun.}},  vol. 66, no. 10, pp. 4666-4679, Oct. 2018.


\bibitem{linear}
J. Yao, X. Zhou, Y. Liu and S. Feng, ``Secure transmission in linear multihop relaying networks,'' {\it{IEEE Trans. Wireless Commun.}}, vol. 17, no. 2, pp. 822-834, Feb. 2018.









  \bibitem{linear1}
M. Sikora, J. N. Laneman, M. Haenggi, D. J. Costello and T. E. Fuja, ``Bandwidth- and power-efficient routing in linear wireless networks,'' {\it{IEEE Trans. Inf. Theory}}, vol. 52, no. 6, pp. 2624-2633, Jun. 2006.

\bibitem{DFsecure}
T.-X. Zheng, H.-M. Wang, F. Liu, and M. H. Lee, ``Outage constrained secrecy throughput maximization for DF relay networks,'' \emph{IEEE Trans. Commun.}, vol. 63, no. 5, pp. 1741-1755, May 2015.



     
     
     






     
\bibitem{height_pathloss}J. Baek, S. I. Han and Y. Han, "Optimal resource allocation for non-orthogonal transmission in UAV relay systems," {\it{IEEE Wireless Communications Letters}}, vol. 7, no. 3, pp. 356-359, June 2018.











\end{thebibliography}
\end{document}